\documentclass[journal]{IEEEtran}
\usepackage[utf8]{inputenc}
\usepackage{mathtools}
\usepackage{graphicx}
\usepackage[table]{xcolor} 
\usepackage{xcolor}
\usepackage{amsthm}
\usepackage{algorithm}
\usepackage{algpseudocode}
\usepackage{amssymb}
\usepackage{cite}
\usepackage{multirow}
\usepackage{bm}
\usepackage{tablefootnote}
\usepackage{threeparttable}
\usepackage{braket}
\usepackage{booktabs}
\usepackage{hyperref}
\usepackage{caption}
\usepackage{hhline}

\usepackage{orcidlink}

\usepackage{tikz}
\usepackage{tikzpeople}
\usetikzlibrary{tikzmark, shapes, fit,backgrounds}
\usetikzlibrary{matrix, decorations.pathreplacing, angles, quotes}
\usetikzlibrary{patterns, calc, positioning}
\usetikzlibrary{arrows, arrows.meta}
\usetikzlibrary{shapes.multipart}
\usetikzlibrary{fit}
\usetikzlibrary{shapes.geometric, positioning}
\usetikzlibrary{decorations.pathmorphing}

\tikzset{meter/.append style={draw, inner sep=10, rectangle, font=\vphantom{A}, minimum width=30, scale=.7, path picture={\draw[black] ([shift={(.1,.3)}]path picture bounding box.south west) to[bend left=50] ([shift={(-.1,.3)}]path picture bounding box.south east);\draw[black,-{Latex[scale=.5]}] ([shift={(0,.1)}]path picture bounding box.south) -- ([shift={(.3,-.1)}]path picture bounding box.north);}}}
\tikzset{snake it/.style={decorate, decoration=snake}}

\newcommand{\Tr}{\normalfont{\mbox{Tr}}}

\newtheorem{theorem}{Theorem}

\newtheorem{remark}{Remark}

\theoremstyle{definition}

\title{Space-sharing and Singleton Bounds for Entanglement-assisted Classical Coding}
\author{
  Yuhang Yao${}^{\orcidlink{0000-0003-2347-0091}}$, \IEEEmembership{Graduate Student Member, IEEE},
  Tushita Prasad${}^{\orcidlink{0000-0002-9371-8534}}$,
  Markus Grassl${}^{\orcidlink{0000-0002-3720-5195}}$, \IEEEmembership{Senior Member, IEEE},\\
  Syed Jafar${}^{\orcidlink{0000-0003-2038-2977}}$, \IEEEmembership{Fellow, IEEE}, and 
  Hua Sun${}^{\orcidlink{0000-0001-8777-7987}}$,  \IEEEmembership{Senior Member, IEEE}%
\thanks{
Yuhang Yao and Syed Jafar are with University of California Irvine, USA (e-mail: \{yuhangy5,syed\}@uci.edu). Tushita Prasad and Markus Grassl are with University of Gdansk, Poland (e-mail: tushita.prasad@phdstud.ug.edu.pl, markus.grassl@ug.edu.pl). Hua Sun is with University of North Texas, USA (e-mail: Hua.Sun@unt.edu).
The work of Markus Grassl and Tushita Prasad is carried out under the
`International Centre for Theory of Quantum Technologies 2.0: R\&D
Industrial and Experimental Phase’ project (contract
no.~FENG.02.01-IP.05-0006/23). The project is implemented as part of
the International Research Agendas Programme of the Foundation for
Polish Science, co-financed by the European Funds for a Smart Economy
2021-2027 (FENG), Priority FENG.02 Innovation-friendly environment,
Measure FENG.02.01.
}%
}

\allowdisplaybreaks

\begin{document}
\maketitle
\begin{abstract}
Recent work has noted that a space-sharing argument proves the tightness of the entropic quantum Singleton bounds, which was left open in the literature
for various settings involving only-quantum messages, only-classical messages, or both classical and quantum messages. Focusing on the setting of entanglement-assisted classical coding (EACC), in this letter we first elaborate upon the space-sharing argument and the tight Singleton bound for this setting,  and then establish a new tight entropic Singleton bound for EACC codes with  entanglement assistance distributed across  a subset of  encoders when only local quantum operations are allowed at each encoder.
\end{abstract}
\begin{IEEEkeywords}
Entanglement-assisted coding, Singleton bound, quantum erasure channel.
\end{IEEEkeywords}

\section{Introduction}
An  $[n,k,d;c]_q$ entanglement-assisted classical communication (EACC) code  \cite{Grassl2025codes} allows sending a $k$-dit\footnote{A dit is a $q$-ary symbol. A qudit is a $q$-dimensional quantum system.} classical message over $n$ uses of a $q$-dimensional quantum channel, with assistance from $c$ maximally entangled qudit pairs shared in advance between the encoder and decoder, while tolerating $d-1$ erasures in the transmission. Similar to the Singleton bound  in classical coding theory, it follows from \cite{Grassl2025codes} and \cite{mamindlapally2023singleton} that  any $[n,k,d;c]_q$ EACC code must satisfy the bound,
\begin{align} \label{eq:singleton}
	k \leq (1+c/n)(n-d+1).
\end{align}
We refer to \eqref{eq:singleton} as the EACC Singleton bound.
Reference \cite{Grassl2025codes} proposes a family of EACC codes with good distance $d$ and dimension $q$, for the parameter regime
\begin{align}\label{eq:singleton2}
k\leq \max(n-2d+2+c,n-d+1),
\end{align}
and leaves open the question of tightness of
\eqref{eq:singleton}. Recently, it has been noted in
\cite{sun2025capacityerasurepronequantumstorage} that a
`space-sharing' argument suffices to show the tightness of
\eqref{eq:singleton} for EACC codes, along with the tightness of the
entropic Singleton bounds in
\cite{mamindlapally2023singleton,grassl2022entropic} that involve
generalizations to quantum communication. In this letter we first
elaborate upon the space-sharing argument and the tightness of
\eqref{eq:singleton} for EACC codes, and then show that
\eqref{eq:singleton2} is a tight entropic Singleton bound if the
entanglement is distributed across separate encoders.

\section{Problem Formulation}
\subsection{Notations}
Let $\mathbb{Z},$ $\mathbb{R}$, and $\mathbb{C}$ denote the sets of
integers, real numbers, and complex numbers, respectively. The
notation $\mathbb{Z}_{\geq a}$ ($\mathbb{R}_{\geq a}$) denotes the
integers (reals) greater than or equal to $a$. By $[n]$ we denote the
set $\{1,2,\cdots, n\}$ for $n\in \mathbb{Z}_{\geq 1}$.  For integers
$a$, and $b$, $\gcd(a,b)$ denotes the greatest common divisor of $a$
and $b$.  The cardinality of a finite set $\mathcal{S}$ is denoted by
$|\mathcal{S}|$.

For a quantum system $A$, we use $\mathcal{H}_A$ to denote its
associated Hilbert space, and $\mathcal{D}(\mathcal{H}_A)$ to denote
the set of density operators on $\mathcal{H}_A$.  The dimension of a
quantum system $A$ is the dimension of its associated Hilbert space
and is denoted as $|A|$. By default, in this letter a $q$-dimensional
quantum system is referred to as a qudit. The size of a quantum system
$A$ is measured as $\log_q |A|$ (qudits). A qudit with $q=2$ is called
a qubit.

We use $\mathbb{I}(x)$ to denote the indicator function, returning $1$
when the predicate $x$ is true and $0$ otherwise.  The family of all
$k$-element subsets of $\mathcal{S}$ is denoted by
$\binom{\mathcal{S}}{k}$ . Finally, $H(\cdot)$ and $I(\cdot;\cdot)$
denote Shannon entropy and mutual information for classical random
variables, and von Neumann entropy and mutual information for quantum
systems; conditional versions are defined analogously.

\subsection{Entanglement-assisted Classical Communication (EACC)}
In the problem of entanglement assisted classical communication
(EACC), a transmitter (Alice) wants to communicate a message $M$ to
the receiver (Bob), by utilizing a $q$-dimensional quantum erasure
channel $n$ times, along with $c$ $q$-dimensional maximally entangled
pairs that are shared between Alice and Bob in advance. The
communication should tolerate up to $d-1$ erasures introduced by the
channel.
 
An $[n,k,d;c]_q$ EACC coding scheme is depicted in
Fig. \ref{fig:coding_scheme}. The following elements are needed to
specify a scheme:
\begin{itemize}
	\item $n,d \in \mathbb{Z}_{\geq 1}, c \in \mathbb{Z}_{\geq 0}$ and $q\in \mathbb{Z}_{\geq 2}$;
	\item a finite set $\mathcal{M}$ with $\log_q |\mathcal{M}|=k$;
	\item encoders ${\rm ENC}_{m} \colon \mathcal{D}(\mathcal{H}_{A})\to \mathcal{D}(\otimes_{i\in[N]}\mathcal{H}_{Q_i})$, $m\in \mathcal{M}$;
	\item decoders ${\rm DEC}_{\mathcal{E}}$ for $\mathcal{E} \in
          \binom{[n]}{d-1}$. Each ${\rm DEC}_{\mathcal{E}}$ is a POVM
          $\{\Lambda_{\mathcal{E}}^{\hat{m}}\}_{\hat{m}\in
            \mathcal{M}}$ on the quantum systems $(Q_i\colon i\in
                [n]\setminus \mathcal{E})$ and $B$.
\end{itemize}
The scheme works as follows. 

\noindent {\bf Entanglement distribution:} Before the communication,
Alice and Bob share $c$ maximally entangled pairs, $A_1B_1, \dots,
A_cB_c$.  Let $A = A_1\cdots A_c$ collectively denote the entangled
resources initially held by Alice, and let $B = B_1\cdots B_c$
collectively denote the entangled resources initially held by
Bob. Each $A_sB_s$ for $s \in [c]$ is in a $q$-dimensional maximally
entangled state, i.e.,
\begin{align}
	\ket{\psi}_{A_sB_s} = \frac{1}{\sqrt{q}}\sum_{i\in \mathbb{F}_q} \ket{i}_{A_s} \ket{i}_{B_s},
\end{align}
where $\{\ket{i}_{A_s}\}_{i\in \mathbb{F}_q}$ and
$\{\ket{j}_{B_s}\}_{j\in \mathbb{F}_q}$ are sets of orthonormal basis
vectors for the Hilbert spaces $\mathcal{H}_{A_s}$ and
$\mathcal{H}_{B_s}$, respectively.

\noindent {\bf Encoding:} Alice wants to send a classical message $m \in
\mathcal{M}$. The encoder ${\rm ENC}_m$ (which is a complete positive
trace-preserving (CPTP) map) takes the input quantum system $A$ and
outputs a quantum state on the joint quantum systems $Q_1,Q_2,\dots,
Q_n$, each of size equal to one qudit. At this stage, the joint state
of $Q_1\cdots Q_n B_1\cdots B_{c}$ is denoted as $\sigma_m$. The
states of the systems $Q_1,\dots, Q_n$ are sent through the erasure
channel $\mathcal{N}$ which outputs the quantum systems $(Q_i\colon
i\in [N] \setminus \mathcal{E})$ and erases $(Q_i\colon i\in
\mathcal{E})$.

\noindent {\bf Decoding:} Bob receives $(Q_i\colon i \in [N]
\setminus \mathcal{E})$ from $\mathcal{N}$. Bob knows $\mathcal{E}$,
based on which he conducts a measurement defined by a POVM
$\{\Lambda_{\mathcal{E}}^{\hat{m}}\}_{\hat{m}\in \mathcal{M}}$ on
$(Q_i\colon i \in [n]\setminus \mathcal{E})$ and $B$.  The decoder
output $ \widehat{M}=\hat{m} $ has probability
$\Tr(\Lambda_{\mathcal{E}}^{\hat{m}}\sigma^m_{\mathcal{E}})$, where
$\sigma^m_{\mathcal{E}}$ denotes the partial state for $(Q_i\colon i
\in [n] \setminus \mathcal{E})$ and $B$. We require that a feasible scheme
must have
$\Tr(\Lambda_{\mathcal{E}}^{\hat{m}}\sigma^m_{\mathcal{E}})=\mathbb{I}(\hat{m}=m)$
for every $m,\hat{m} \in \mathcal{M}$, $\mathcal{E} \in
\binom{[n]}{d-1}$, i.e., we require perfect correction of the erasures.

\begin{figure}[htbp]
\center
\begin{tikzpicture}[scale=0.95, transform shape]
\node (T) at (-1,-2.5) {\small $\ket{\psi}_{AB}$};
 
\node (Q1) at (2.2,0) [ ] {$Q_1$};
\node (Q2) at (2.2,-0.8) [ ] {$Q_2$};
\node at (2.2,-1.4) [align=center] {$\vdots$};
\node (Q3) at (2.2,-2.2) [ ] {\small $Q_{n}$};

\node (Enc) at (1.1,-1.1) [draw, thick, rectangle,  minimum width = 1cm, minimum height = 3cm] {\small ${\rm ENC}_{m}$};

\node (Ch) at (3.1,-1.1) [draw, thick, rectangle, minimum height = 3cm, minimum width = 0.8 cm] {\small $\mathcal{N}$};

\node (Dec) at (5.8,-2.6) [draw, thick, rectangle, minimum height = 3 cm, minimum width = 1cm] {\small ${\rm DEC}_{\mathcal{E}}$};

\node (output) at (6.85,-2.75) {$\widehat{M}$};

\draw [color=gray, thick] ($(T.east)+(-0.1,0)$)--($(T.east)+(0.5,1.25)$)  node[pos=0.6, font=\small, left, text=black]{$A$}  --($(T.east)+(0.95,1.25)$);

\draw [color=gray, thick] ($(T.east)+(-0.1,0)$)--($(T.east)+(0.5,-1.25)$) node[pos=0.6, font=\small, left, text=black]{$B$}  --($(T.east)+(5.7,-1.25)$) node[pos=0.5, font=\small,above, text=black]{$B=B_1 \cdots B_{c}$};

\foreach \i in {1,2,3} {
\draw [color=gray, thick] ($(Q\i.east)+(-0.85,0)$)--($(Q\i.east)+(-0.65,0)$);
\draw [color=gray, thick] ($(Q\i.east)+(-0.1,0)$)--($(Q\i.east)+(0.15,0)$);
}

\draw [color=gray, thick] ($(Ch.east)+(0,0.5)$)--($(Ch.east)+(0.4,0.5)$)  node[font=\small, below right = -0.3cm and -0.1cm, align=center, text=black] {$ Q_i\colon  i\in \mathcal{E}$};

\draw [color=gray, thick] ($(Ch.east)+(0,-0.5)$)--($(Ch.east)+(0.4,-0.5)$) node[font=\small, below right = -0.3cm and  -0.4cm, align=center, text=black] {$Q_i\colon$ \\ $ i\in [n]\setminus \mathcal{E} $} ;

\draw [color=gray, thick] ($(Ch.east)+(1.2,-0.5)$)--($(Ch.east)+(1.75,-0.5)$);

\draw [color=gray, thick] (output) -- ($(output.west)+(-0.18,0)$);

\draw [color=gray, thick] ($(Q1.south)+(0,-3.8)$) -- ($(Q1.south)+(0,-4)$)--($(Q1.south)+(2.5,-4)$) node [below, pos=0.5, font=\small, text=black] {$\sigma_m$} --($(Q1.south)+(2.5,-3.8)$);

\end{tikzpicture}
\caption{General description of a coding scheme for EACC.  
}
\label{fig:coding_scheme}
\end{figure}
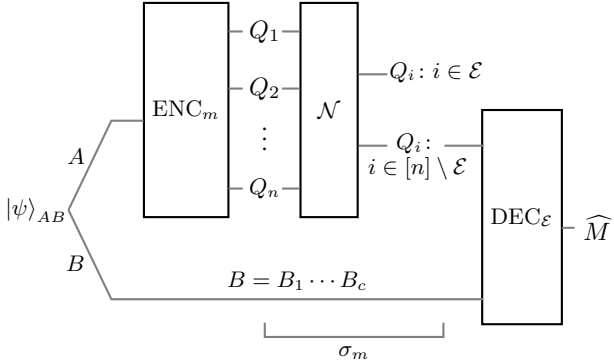

Throughout this letter, let us assume that $n,d\in \mathbb{Z}_{\geq
  1}$, $c \in \mathbb{Z}_{\geq 0}$ and $c \leq n$. We refer to such
parameters as admissible.  For any admissible $n$, $d$, $c$, define
\begin{alignat}{5} \label{eq:def_k_star}
  k^* = \sup \{k\colon \text{there} & \text{ exists $q$ and}\notag\\
  &\text{an $[n,k,d;c]_q$ coding scheme} \},
\end{alignat}
as the supremum of $k$ over all $q$ and $[n,k,d;c]_q$ coding schemes.

Recall the open problem in \cite{Grassl2025codes}, to determine
whether for all admissible $n$, $d$, $c$, the bound \eqref{eq:singleton} is
attained, i.e., to determine whether $k^* = (1+c/n) (n-d+1)$ for all
admissible $n$, $d$, $c$.

\begin{remark} Note that the Singleton bound \eqref{eq:singleton} does
  not depend on $q$, i.e., it holds for every $q$. Indeed Singleton
  bounds are entropic bounds, and scale-invariance is a typical
  feature of such bounds (recall that the entropic region is a
  cone). Since each $q$ corresponds to a different EACC channel, the
  bound applies to the entire class of channels. As it is evident from
  the definition of $k^*$ in \eqref{eq:def_k_star}, the question of
  tightness of the bound should be understood as the question:
  \emph{does there exist a channel in this class (i.e., does there
  exist a $q$) for which the bound is tight?} In particular, note that
  the tightness of the bound does not imply that it is tight for every
  $q$.
\end{remark}

\section{Saturation of the EACC Singleton Bound}
The following theorem follows from the observation in
\cite{sun2025capacityerasurepronequantumstorage}, that space sharing
suffices to answer the open question in \cite{Grassl2025codes}.
\begin{theorem} \label{thm:singleton}
	For any admissible $n$, $d$, $c$ with $c\leq n$, $k^* = (1+c/n) (n-d+1)$.
\end{theorem}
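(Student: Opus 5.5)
The proof has two parts. The converse $k^*\le (1+c/n)(n-d+1)$ is already available: \eqref{eq:singleton} holds for every $[n,k,d;c]_q$ EACC code and every $q$, by \cite{Grassl2025codes,mamindlapally2023singleton}. Taking the supremum over $q$ and over schemes gives the upper bound on $k^*$. It therefore remains to show achievability: for suitable $q$ we exhibit an $[n,k,d;c]_q$ scheme with $k=(1+c/n)(n-d+1)$ exactly. The case $d>n$ is degenerate, since then the right-hand side is $\le 0$, so assume $1\le d\le n$. Note that $k=\log_q|\mathcal M|$ need not be an integer, which is what allows the fractional value $c/n$.

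The construction is space sharing across ``layers'' of a large qudit, with two elementary building blocks over a smaller alphabet $Q$. Fix a prime power $Q\ge n$ and set $q=Q^n$. Each $q$-dimensional channel qudit $Q_i$ is then viewed as $n$ $Q$-dimensional subsystems $Q_i^{(1)},\dots,Q_i^{(n)}$ (layers). The $c$ $q$-dimensional maximally entangled pairs are viewed as $cn$ $Q$-dimensional maximally entangled pairs, since a tensor product of maximally entangled states is maximally entangled.

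The two building blocks are as follows.
\begin{itemize}
\item \emph{Classical block.} On one layer, i.e.\ $n$ $Q$-dimensional subsystems at positions $1,\dots,n$, we send a codeword of an $[n,n-d+1,d]_Q$ Reed--Solomon (MDS) code, encoding each symbol in the computational basis. This carries $n-d+1$ $Q$-dits and uses no entanglement.
\item \emph{Entanglement-assisted block.} On one layer, with $n$ $Q$-dimensional ebits (one per position), we take an MDS $[n,n-d+1,d]_{Q^2}$ code, which exists since $Q^2\ge n$. Each $Q^2$-ary symbol is sent on the corresponding position by superdense coding, i.e.\ by applying a generalized Pauli $X^aZ^b$ to Alice's half of that position's ebit. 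Bob, holding the unerased positions together with their ebit halves, recovers the symbols at those positions exactly by a Bell measurement. He then erasure-decodes the classical MDS code. This block carries $2(n-d+1)$ $Q$-dits.
\end{itemize}
In both blocks, erasing position $i$ erases exactly one symbol. Hence any $d-1$ erasures are corrected with zero error.

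We use entanglement-assisted blocks on $c$ of the $n$ layers and classical blocks on the remaining $n-c$ layers, which is possible since $c\le n$. The entanglement consumed is $cn$ $Q$-ebits, i.e.\ exactly $c$ $q$-ebits. The message size is $c\cdot 2(n-d+1)+(n-c)(n-d+1)=(n+c)(n-d+1)$ $Q$-dits. Dividing by $n$ converts this to $(1+c/n)(n-d+1)$ $q$-dits. An erasure of channel use $i$ removes $Q_i^{(\ell)}$ in every layer $\ell$ simultaneously, which is a single position erasure in each block. So each layer is decoded independently from the surviving positions and its own ebits, and the overall decoder ${\rm DEC}_{\mathcal E}$ is the product of the layer decoders.

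The main point to check carefully is this bookkeeping: that the layered encoder fits the formal model (a CPTP map from $A$ to $Q_1\cdots Q_n$ for each $m$), and that perfect decoding holds for every $\mathcal E\in\binom{[n]}{d-1}$. I expect no real obstacle beyond verifying that superdense coding together with MDS erasure decoding gives zero-error recovery when the ebit halves of the erased positions are unused.
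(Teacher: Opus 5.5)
Your proposal is correct and is essentially the paper's proof: the converse is cited from the literature, and achievability comes from space-sharing $Q$-ary layers of a $q=Q^r$-dimensional qudit, mixing classical MDS layers with superdense-coding layers (each using $n$ ebits) in proportion $(n-c):c$. The only difference is that you take $r=n$, with $c$ entanglement-assisted and $n-c$ classical layers, whereas the paper takes the reduced $r=n/\gcd(n,c)$ with $\ell_2=cr/n$ and $\ell_1=(n-c)r/n$ layers, which gives a smaller $q$ at the same rate.
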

Let us note that Theorem \ref{thm:singleton} states that for any
admissible $n$, $d$, $c$, there exists a $q$ and a corresponding
$[n,k,d;c]_q$ coding scheme with $k$ saturating the EACC Singleton
bound.  Before we present the general proof, let us provide a toy
example.

\subsection{Example}
In this example, we show the construction of a $[3,10/3,1;2]_8$ EACC
code. Note that for $n=3$, $d=1$, $c=2$, this code saturates the EACC
Singleton bound with $q=8$.  First, $q=8$ means that we send
$8$-dimensional quantum systems over an erasure channel. Before the
communication, Alice and Bob share $c=2$ pairs of quantum systems,
namely, $A_1B_1$ and $A_2B_2$, where each pair is in an
$8$-dimensional maximally entangled state. In fact, for $i\in [2]$,
$A_iB_i$ can be equivalently viewed as $3$ pairs of maximally
entangled qubits, since $q=2^3$. Specifically, for $i\in [2]$, $A_i$
can be decomposed into subsystems $A_{i,1}$, $A_{i,2}$, $A_{i,3}$, and
similarly $B_i$ consists of subsystems $B_{i,1}$, $B_{i,2}$, $B_{i,3}$.
Each pair $A_{i,j}B_{i,j}$ is, without loss of generality, in the
maximally entangled state $\frac{1}{\sqrt{2}}
\big(\ket{0}_{A_{i,j}}\ket{0}_{B_{i,j}}+
\ket{1}_{A_{i,j}}\ket{1}_{B_{i,j}}\big)$ for $i\in [2]$, $j\in [3]$.

Next, we conduct an entanglement rearrangement operation to spread the
entanglement at the transmitter uniformly into
$Q_1,Q_2,Q_3$. Specifically, consider that for $i\in [3]$, the quantum
system $Q_i$ to be sent through the channel consists of subsystems
$Q_{i,1}, Q_{i,2}, Q_{i,3}$, where each $Q_{i,j}$ is a qubit. We first
initialize $Q_{1,1} Q_{2,1} Q_{3,1}$ to the state
$\ket{0}_{Q_{1,1}}\ket{0}_{Q_{2,1}}\ket{0}_{Q_{3,1}}$. Then make the
following assignment:
\begin{align} \label{eq:ER}
	\begin{bmatrix}
		Q_{1,2} & Q_{2,2} & Q_{3,2} \\
		Q_{1,3} & Q_{2,3} & Q_{3,3}
	\end{bmatrix}
	=
	\begin{bmatrix}
		A_{1,1} & A_{1,3} & A_{2,2} \\
		A_{1,2} & A_{2,1} & A_{2,3}
	\end{bmatrix}.
\end{align}
Note that for $i\in [3]$, each $Q_i = Q_{i,1}Q_{i,2}Q_{i,3}$ has $2/3$
parts that are maximally entangled with Bob's quantum resource.

Depending on the message $m$, Alice processes $Q_1,Q_2,Q_3$. For this
part, we apply the idea of space-sharing of
\cite{sun2025capacityerasurepronequantumstorage}. To apply the
space-sharing argument in this example, we will first construct three
EACC codes, $C_1$, $C_2$, and $C_3$, each with $n=3$, $d=2$, and
$q=2$. Moreover, $C_j$ is used for encoding the subsystems
$Q_{1,j}Q_{2,j}Q_{3,j}$ for $j\in [3]$. We then combine the three
codes into a (bigger) EACC code with the same $n=3$ and $d=2$, but
$q=8$.

Specifically, $C_1$ is a $[3,2,2;0]_2$ EACC code which does not use
entanglement. We use a classical binary MDS code with generator matrix
\begin{align}
	G = \begin{bmatrix}
		1 & 0 & 1\\
		0 & 1 & 1
	\end{bmatrix} \in \mathbb{F}_2^{2\times 3}.
\end{align}
The encoding process of $C_1$ is to map two information bits
$[u_1,u_2]$ into a coded vector $[x_1,x_2,x_3] = [u_1,u_2]G$, and
directly encode each $x_i$ into $\ket{x_i}_{Q_{i,1}}$ for $i \in
\{1,2,3\}$, where $\{\ket{i}\}_{Q_{i,1}}$ denotes a set of orthonormal
basis vectors for the Hilbert space
$\mathcal{H}_{Q_{i,1}}$. Mathematically, the quantum state of
$Q_{1,1}Q_{2,1}Q_{3,1}$ after this process is
\begin{align}
	\ket{x_1}_{Q_{1,1}} \ket{x_2}_{Q_{2,1}} \ket{x_3}_{Q_{3,1}}.
\end{align}

The code $C_2$ is a $[3,4,2;3]_2$ EACC code.  We construct it based
on the same matrix $G$, now considered as the generator matrix of a
classical code over $\mathbb{F}_4$, and the superdense coding protocol
\cite{bennett1992communication}. Let $[u_3,u_4,u_5,u_6]$ be four
information bits.  Let $[y_1,y_2,y_3] = [u_3,u_4]G$, $[y_1',y_2',y_3']
= [u_5,u_6]G$.  According to \eqref{eq:ER}, we have $Q_{1,2}B_{1,1},
Q_{2,2}B_{1,3}, Q_{3,2}B_{2,2}$ each being a pair of maximally
entangled qubits. For $i\in [3]$, depending on the pair of bits
$[y_i,y_i']$ Alice applies a local unitary transformation
$X^{y_i}Z^{y_i'}$ which maps the maximally entangled state into one of
the four mutually orthogonal states of the Bell basis.  The the
resulting quantum state is
\begin{align}
	\ket{\beta_{y_1,y_1'}}_{Q_{1,2}B_{1,1}} \ket{\beta_{y_2,y_2'}}_{Q_{2,2}B_{1,3}}\ket{\beta_{y_3,y_3'}}_{Q_{3,2}B_{2,2}},
\end{align}
where $\{\beta_{{\sf x},{\sf z}}\}_{{\sf x}\in \{0,1\}, {\sf z}\in \{0,1\}}$ denote the set of Bell basis vectors. 

The code $C_3$ is the same as the code $C_2$, with its $4$ information
bits denoted as $[u_7,u_8,u_9,u_{10}]$. Let $[z_1,z_2,z_3] =
[u_7,u_8]G$, $[z_1',z_2',z_3'] = [u_9,u_{10}]G$. Similar to the
encoding process of $C_2$, Alice encodes $[z_i,z_i']$ for $i\in [3]$
into $Q_{1,3}B_{1,2},Q_{2,3}B_{2,1},Q_{3,3}B_{2,3}$, respectively. The
resulting quantum state is
\begin{align}
	\ket{\beta_{z_1,z_1'}}_{Q_{1,3}B_{1,2}} \ket{\beta_{z_2,z_2'}}_{Q_{2,3}B_{2,1}}\ket{\beta_{z_3,z_3'}}_{Q_{3,3}B_{2,3}}.
\end{align}

For the decoding process, suppose one of $Q_1,Q_2,Q_3$ is erased. Due
to symmetry in the following we assume without loss of generality that
$Q_3$ is erased. For the code $C_1$, Bob measures $Q_{1,1}, Q_{2,1}$
in their respective computational basis and obtains $x_1,x_2$, from
which he decodes $u_1,u_2$ since $G$ is MDS. For the code $C_2$, Bob
measures $Q_{1,2}B_{1,1}$ and $Q_{2,2}B_{1,3}$ in the Bell basis and
obtains $[y_1,y_1',y_2,y_2']$, from which he decodes
$[u_3,u_4,u_5,u_6]$. Correspondingly, for the code $C_3$, Bob measures
$Q_{1,3}B_{1,2}$ and $Q_{2,3}B_{2,1}$ in the Bell basis and obtains
$[z_1,z_1',z_2,z_2']$, from which he decodes
$[u_7,u_8,u_9,u_{10}]$. Therefore, the combined code $C$, which is the
combination of $C_1$, $C_2$ and $C_3$, guarantees the decoding of
$[u_1,u_2,\cdots, u_{10}]$, in total $10$ bits of classical
information, while tolerating $1$ erasure of the quantum system $Q_1$,
$Q_2$ or $Q_3$.  The total number of classical information transmitted
is $10$ bits, and therefore $k=\log_q 2^{10} = \log_8 1024 =
10/3$. Finally, $C$ has $n=3$ and $d=2$, the same as those for the
codes $C_1$ ,$C_2$, and $C_3$. Therefore, the combined code $C$ is a
$[3,10/3,2;2]_8$ EACC code.  We are using the three codes for qubit
channels ($q=2$) simultaneously over a channel with $q=2^3=8$.

\subsection{Proof of Theorem \ref{thm:singleton}}
The extension to the general case is straightforward. Recall that we
are given admissible $n$, $d$, $c$. Let $r = n/\gcd(n,c)$, $\ell_1 =
\tfrac{(n-c)r}{n}$, and $\ell_2 = \tfrac{cr}{n}$. Clearly, $\ell_i \in
\mathbb{Z}_{\geq 0}$ for $i\in \{1,2\}$ and that
$\ell_1+\ell_2=r$. Let $k_1=n-d+1$ and $k_2=2(n-d+1)$.  Suppose
$\bar{q}$ is a power of a prime and $\bar{q}\geq n$. Then there exists
a classical $[n,k_1,n-k_1+1]_{\bar{q}}$ MDS code (e.g., a Reed-Solomon
code) over $\mathbb{F}_{\bar{q}}$. Let $q = \bar{q}^r \in
\mathbb{Z}_{\geq 0}$.  Let $C_1,C_2,\dots, C_{\ell_1}$ each be an
$[n,k_1,n-k_1+1;0]_{\bar{q}}$ EACC code, i.e., constructed from a
classical $[n,k_1,n-k_1+1]_{\bar{q}}$ MDS code over
$\mathbb{F}_{\bar{q}}$.

Similar as in the example above, one can construct an
$[n,k_2,d;n]_{\bar{q}}$ EACC code by utilizing $n$ maximally entangled
pairs of dimension $\bar{q}$ and the superdense coding protocol (see
also \cite{brun2014catalytic}).
Since Alice and Bob share $c$ maximally entangled pairs of dimension
$q$, which are equivalently $cr$ maximally entangled pairs of
dimension $\bar{q}$, they are able to construct $\tfrac{cr}{n} =
\ell_2$ such $[n,k_2,d;n]_{\bar{q}}$ EACC codes. Let us denote them by
$C_{\ell_1+1},C_{\ell_2+1},\dots, C_{\ell_1+\ell_2}$.

Now combining all codes $C_1,C_2,\dots, C_{\ell_1+\ell_2}$ using the
space-sharing argument, we obtain an $[n,k,d;c]_q$ EACC code $C$,
where $k = \log_q (\bar{q}^{k_1\ell_1 + k_2\ell_2})
=\frac{k_1\ell_1+k_2\ell_2}{\ell_1+\ell_2} = (1+c/n)(n-d+1)$,
attaining the EACC Singleton bound. \hfill \qed

\subsection{Asymptotic saturation with large $q$}
We conclude this section by pointing out that given admissible $n$,
$d$, $c$, for sufficiently large $q$, there are EACC codes with $k\to
(1+c/n)(n-d+1)$ as $q \to \infty$. This is because for sufficiently
large $q$, there exists $\bar{q} \in [\tfrac{q^{1/r}}{2},q^{1/r}]$
such that $\bar{q}\geq n$ and $\bar{q}$ is a power of $2$. One can
then apply the same reasoning to construct such $\ell_1+\ell_2$ EACC
codes for dimension $\bar{q}$ and obtain a combined EACC code for
dimension $\widetilde{q}$, where $\widetilde{q} = \bar{q}^{r}$ and thus
$\frac{q}{2^r}\leq \widetilde{q} \leq q$. Since we are using
a quantum erasure channel for $q$-dimensional systems, it allows
sending $\widetilde{q}$-dimensional systems (for $\widetilde{q}\le q$,
not using the remaining dimensions). The corresponding $k$ for the
combined code is $k = \log_q (\bar{q}^{k_1\ell_1 + k_2\ell_2}) =
\log_q (\widetilde{q}) \cdot \log_{\widetilde{q}}(\bar{q}^{k_1\ell_1 +
  k_2\ell_2}) \geq \log_q(\tfrac{q}{2^r})
\cdot\frac{k_1\ell_1+k_2\ell_2}{\ell_1+\ell_2} = (1-r\log_q 2)
(1+c/n)(n-d+1)$, which approaches $(1+c/n)(n-d+1)$ as $q\to \infty$.

\section{EACC with Separate Encoders}
The framework in Fig. \ref{fig:coding_scheme} allows \emph{joint}
(quantum) processing of the systems $A_1,A_2,\dots, A_c$ carrying half
of the maximally entangled states on $A_iB_i$ to produce
$Q_1,Q_2,\dots, Q_n$.  The same applies to the receiver, jointly
processing $Q_{[n]\setminus\mathcal{E}}$ and $B_1,B_2,\dots, B_c$ to
produce $\widehat{M}$.  Such joint processing may not always be
feasible.  Such a scenario is, for example, distributed storage where
some of the nodes have pre-shared entanglement, but there is no
quantum communication between the nodes when (classical) information
is stored.

Therefore, it is also important to study codes without such
joint processing, and in particular the performance limit of such
codes. To this end, let us impose a distribution processing constraint
on the EACC framework and consider the class of coding schemes
illustrated in Fig. \ref{fig:separate_enc}. Specifically, for $i\in
\{1,2,\cdots, c\}$, $\mbox{ENC}_m^i$ is a separate encoder, which
outputs $Q_i$ only based on $m$ and $A_i$, and for $i\in
\{c+1,c+2,\cdots,n\}$, $\mbox{ENC}_m^i$ is again a separate encoder
that outputs $Q_i$ only based on $m$.

Notably, the coding schemes in Fig. \ref{fig:separate_enc} subsume the
coding schemes considered in \cite{Grassl2025codes}. Our next theorem
establishes an entropic Singleton bound for any $[n,k,d;c]_q$ EACC
coding scheme with separate encoders.
 
\begin{figure}[h]
\center
\begin{tikzpicture}[scale=0.94, transform shape]
\node (T) at (-1,-2.5) {\small $\ket{\psi}_{AB}$};
 
\node (Q1) at (2.85,0.5) [ ] {$Q_1$};
\node (Q2) at (2.85,-0.8) [ ] {$Q_c$};
\node (Q3) at (2.85,-1.5) [ ] {\small $Q_{c+1}$};
\node (Q4) at (2.85,-2.75) [ ] {\small $Q_{n}$};

\node at (0.6,0.1) [align=center] {$\vdots$};

\node at (1.7,0) [align=center] {$\vdots$};
\node at (1.7,-2) [align=center] {$\vdots$};

\node at (2.85,-0.) [align=center] {$\vdots$};
\node at (2.85,-2) [align=center] {$\vdots$};

\node (Enc1) at (1.65,0.5) [draw, thick, rectangle,  minimum width = 1.4cm, minimum height = 0.6cm] {\footnotesize ${\rm ENC}_{m}^1$};

\node (Enc2) at (1.65,-0.75) [draw, thick, rectangle,  minimum width = 1.4cm, minimum height = 0.6cm] {\footnotesize ${\rm ENC}_{m}^c$};

\node (Enc3) at (1.65,-1.5) [draw, thick, rectangle,  minimum width = 1.4cm, minimum height = 0.6cm] {\footnotesize ${\rm ENC}_{m}^{c+1}$};

\node (Enc4) at (1.65,-2.75) [draw, thick, rectangle,  minimum width = 1.4cm, minimum height = 0.6cm] {\footnotesize ${\rm ENC}_{m}^n$};

\node (Ch) at (3.75,-1.1) [draw, thick, rectangle, minimum height = 3.8cm, minimum width = 0.8 cm] {\small $\mathcal{N}$};

\node (Dec) at (6.5,-2.6) [draw, thick, rectangle, minimum height = 3 cm, minimum width = 1cm] {\small ${\rm DEC}_{\mathcal{E}}$};

\node (output) at (7.55,-2.75) {$\widehat{M}$};

\draw [color=gray, thick] ($(T.east)+(-0.1,0)$)--($(T.east)+(0.5,3)$)   --($(T.east)+(1.4,3)$) node[pos=0.6, font=\small, above=-0.1, text=black]{$A_1$};

\draw [color=gray, thick] ($(T.east)+(-0.1,0)$)--($(T.east)+(0.5,1.75)$)   --($(T.east)+(1.4,1.75)$) node[pos=0.6, font=\small, above=-0.1, text=black]{$A_c$} ;

\draw [color=gray, thick] ($(T.east)+(-0.1,0)$)--($(T.east)+(0.5,-1.25)$)    --($(T.east)+(6.4,-1.25)$) node[pos=0.3, font=\small,above, text=black]{$B=B_1 \cdots B_{c}$};

\foreach \i in {1,2, 4} {
\draw [color=gray, thick] ($(Q\i.east)+(-0.85,0)$)--($(Q\i.east)+(-0.65,0)$);
\draw [color=gray, thick] ($(Q\i.east)+(-0.1,0)$)--($(Q\i.east)+(0.15,0)$);
}
\draw [color=gray, thick] ($(Q3.east)+(-1,0)$)--($(Q3.east)+(-0.85,0)$);
\draw [color=gray, thick] ($(Q3.east)+(-0.1,0)$)--($(Q3.east)+(0.03,0)$);

\draw [color=gray, thick] ($(Ch.east)+(0,0.5)$)--($(Ch.east)+(0.4,0.5)$)  node[font=\small, below right = -0.3cm and -0.1cm, align=center, text=black] {$ Q_i\colon  i\in \mathcal{E}$};

\draw [color=gray, thick] ($(Ch.east)+(0,-0.5)$)--($(Ch.east)+(0.4,-0.5)$) node[font=\small, below right = -0.3cm and  -0.4cm, align=center, text=black] {$Q_i\colon$ \\ $ i\in [n]\setminus \mathcal{E} $} ;

\draw [color=gray, thick] ($(Ch.east)+(1.2,-0.5)$)--($(Ch.east)+(1.75,-0.5)$);

\draw [color=gray, thick] (output) -- ($(output.west)+(-0.18,0)$);

\draw [color=gray,thick] ($(Q1.south)+(0,-4.2)$) -- ($(Q1.south)+(0,-4.4)$)--($(Q1.south)+(2.5,-4.4)$) node [below, pos=0.5, font=\small, text=black] {$\sigma_m$} --($(Q1.south)+(2.5,-4.2)$);

\end{tikzpicture}
\caption{EACC coding framework with separate encoders. }
\label{fig:separate_enc}
\end{figure}
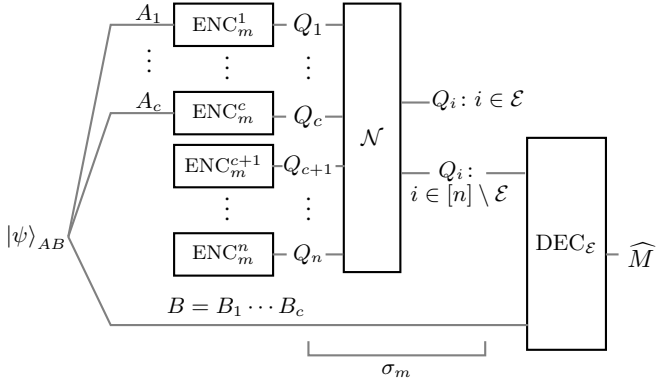

\begin{theorem} \label{thm:separate}
	For any EACC code with separate encoders,
	\begin{align} \label{eq:separate}
		k &{}\leq \max\{n+c-2d+2, n-d+1\} \notag \\
		&{}= \begin{cases}
		    n+c-2d+2, & 0 \leq d-1 \leq c;\\
			n-d+1, & c\leq d-1 \leq n.
		\end{cases}
	\end{align}
\end{theorem}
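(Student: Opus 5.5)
The plan is to pick a convenient erasure pattern in each regime and bound $k$ by a Holevo-type mutual-information argument. The key ingredient is the product structure that separate encoders impose on the joint state for each fixed message.

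\textbf{Setup.} Let $M$ be uniform on $\mathcal{M}$, so $H(M)=k$ measured in dits ($\log_q$). For a fixed $m$, each ${\rm ENC}^i_m$ with $i\le c$ acts only on $A_i$, and each ${\rm ENC}^i_m$ with $i>c$ prepares $Q_i$ from nothing. Hence
\begin{align}
\sigma_m=\bigotimes_{i=1}^{c}\rho^{i}_{m,Q_iB_i}\otimes\bigotimes_{i=c+1}^{n}\rho^{i}_{m,Q_i},
\end{align}
where, since the encoders do not touch $B_i$, each marginal $\rho^i_{m,B_i}=I/q$ is independent of $m$. Consider the classical-quantum state $\sum_m \frac{1}{|\mathcal{M}|}\ket{m}\bra{m}\otimes\sigma_m$. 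Perfect decoding under an erasure pattern $\mathcal{E}$ means $M$ is recoverable from $Q_{[n]\setminus\mathcal{E}}B$. By the Holevo bound this gives
\begin{align}
k=H(M)\le I(M;Q_{[n]\setminus\mathcal{E}}B).
\end{align}

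\textbf{Case $d-1\le c$.} Take $\mathcal{E}=[d-1]$, i.e., erase entangled systems. Write $B'=B_d\cdots B_c$.

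First, $B_{[d-1]}$ can be dropped. For every $m$, the reduced state of $\sigma_m$ on $Q_{[n]\setminus\mathcal{E}}B$ is $(I/q)^{\otimes(d-1)}\otimes\tau_m$, where $\tau_m$ is the state on $Q_{[n]\setminus\mathcal{E}}B'$. Therefore $B_{[d-1]}$ is in a fixed state, product with $M$ and with everything else, and
\begin{align}
I(M;Q_{[n]\setminus\mathcal{E}}B)=I(M;Q_{[n]\setminus\mathcal{E}}B').
\end{align}

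Next, since the marginal of $B'$ does not depend on $m$, $I(M;B')=0$. The chain rule then gives
\begin{align}
I(M;Q_{[n]\setminus\mathcal{E}}B')=H(Q_{[n]\setminus\mathcal{E}}\mid B')-H(Q_{[n]\setminus\mathcal{E}}\mid B'M).
\end{align}
The first term is at most $\log_q|Q_{[n]\setminus\mathcal{E}}|=n-d+1$.

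For the second term, the product structure conditioned on $M=m$ gives
\begin{align}
H(Q_{[n]\setminus\mathcal{E}}\mid B'M)=\sum_m\tfrac{1}{|\mathcal{M}|}\Big(\sum_{i=d}^{c}H(Q_i\mid B_i)_{\rho^i_m}+\sum_{i=c+1}^{n}H(Q_i)_{\rho^i_m}\Big).
\end{align}
Each $H(Q_i\mid B_i)\ge -\log_q|Q_i|=-1$, and each $H(Q_i)\ge 0$. So this term is at least $-(c-d+1)$.

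Combining, $k\le (n-d+1)+(c-d+1)=n+c-2d+2$.

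\textbf{Case $d-1\ge c$.} Take $\mathcal{E}=[c]\cup\{c+1,\dots,d-1\}$, i.e., erase all $c$ entangled systems and $d-1-c$ unentangled ones. Now every $B_i$ is paired with an erased $Q_i$. By the same argument as above, $B$ is in a fixed state, product with $M$ and with the remaining systems. Hence
\begin{align}
k\le I(M;Q_{[n]\setminus\mathcal{E}})\le \log_q|Q_{[n]\setminus\mathcal{E}}|=n-d+1.
\end{align}

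Finally, the two expressions satisfy $n+c-2d+2\ge n-d+1$ exactly when $d-1\le c$. So the two cases together give the $\max$ form of \eqref{eq:separate}.

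The main obstacle is the first case: justifying rigorously that the conditional entropy $H(Q\mid B'M)$ splits into per-encoder terms. This relies on the $m$-wise product form of $\sigma_m$, which must be checked carefully because separate encoders may use arbitrary local CPTP maps, including ancillas that are later discarded. These discarded parts only make the local states mixed and do not break the tensor-product structure across different $i$, so the per-encoder bounds $H(Q_i\mid B_i)\ge-1$ and $H(Q_i)\ge 0$ still apply.
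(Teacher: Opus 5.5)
Your proof is correct and follows essentially the same route as the paper: erase $\mathcal{E}=[d-1]$ in both regimes, apply Holevo's bound, use $I(M;B)=0$ from no-signalling, then the chain rule with a dimension bound on $H(Q\mid B)$, and finally the conditional independence of the encoder outputs given $M$. The differences are cosmetic. You split $H(Q_I\mid MB)$ into per-encoder terms $H(Q_i\mid B_i)\ge -1$, whereas the paper bounds $-H(Q_I\mid MB)\le H(Q_I)\le c-d+1$ via weak monotonicity. In the second regime you discard $B$ outright instead of conditioning on it.
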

One can verify that the bound in \eqref{eq:separate} is in general
less than or equal to the bound in \eqref{eq:singleton}.  We point out
that, given any admissible $n$, $d$, $c$, there exist $q$ and corresponding
coding schemes with separate encoders that saturate the bound
\eqref{eq:separate}, i.e., the bound is tight.  Reference
\cite{Grassl2025codes} further discusses aspects related to the
channel dimension $q$.
\begin{proof}
Given any coding scheme with separate encoders, since it must
guarantee correct decoding, i.e., $\Pr(\widehat{M}=m)=1$, for every
realization of the message $m\in \mathcal{M}$, it must guarantee
correct decoding if the message is a random variable $M$ uniformly
distributed over $\mathcal{M}$.  Define the joint classical-quantum
state
\begin{align}
	\sigma_{MQ_1\cdots Q_n} = \frac{1}{|\mathcal{M}|}\sum_{m\in \mathcal{M}}\ket{m}\bra{m}_M\otimes \sigma_m,
\end{align}
where $\{\ket{m}\}_{m\in \mathcal{M}}$ denotes a set of orthonormal
basis vectors in the $|\mathcal{M}|$-dimensional Hilbert space
associated with the classical random variable $M$.  For notational
simplicity, we will write the state as $\sigma$.  In the following,
for a subset $I\subseteq [n]$, we will use $Q_{I}$ to denote the
quantum system composed of $(Q_{i})_{i\in I}$.  All information
quantities will be defined with respect to logarithms base $q$.  Let
us consider two regimes.
	
First, for $0\leq d-1\leq c$, let $I= \{d,d+1,\ldots, c\}$, $J=
\{c+1,c+2,\ldots, n\}$. We have $\big|[n] \setminus (I \cup J)\big| =
d-1$. Consider the decoder ${\rm DEC}_{\mathcal{E}}$ with
$\mathcal{E}=[n] \setminus (I \cup J)$. Note that the input to this
decoder is $Q_IQ_JB$. We have that {\small
	\begin{align}
	k &= H(M)\notag\\ 
	&= I(M;\widehat{M}) \label{eq:bound_1_decoding} \\
		&\leq I(M;Q_I  Q_J B)_{\sigma} \label{eq:bound_1_holevo} \\
		&= I(M;Q_I Q_J | B)_{\sigma} \label{eq:bound_1_nosig} \\
		&= H(Q_I Q_J| B)_{\sigma} - H(Q_IQ_J | M  B)_{\sigma} \label{eq:bound_1_def_cMI} \\
		&\leq (n-d+1) - H(Q_I| M B)_{\sigma} - H(Q_J | Q_I  M B)_{\sigma} \label{eq:bound_1_dim1_and_chain} \\
		&= (n-d+1)  - H(Q_I | MB)_{\sigma} - H(Q_J | M)_{\sigma}  \label{eq:bound_1_indep}  \\
		&\leq (n-d+1)  - H(Q_I | MB)_{\sigma} \label{eq:bound_1_classical} \\
		&\leq (n-d+1)  + (c-d+1) \label{eq:bound_1_dim2} \\
		&=n+c-2d+2.
	\end{align}
}%
Step \eqref{eq:bound_1_decoding} is because $\Pr(M=\widehat{M})=1$.
Step \eqref{eq:bound_1_holevo} is from  Holevo's bound \cite{holevo1973bounds}. 
Step \eqref{eq:bound_1_nosig} is because $B$ is independent of $M$ according to the no-communication theorem \cite{peres2004quantum}.  
Step \eqref{eq:bound_1_def_cMI} is from the definition of conditional mutual information. 
Step \eqref{eq:bound_1_dim1_and_chain} is because $H(Q_IQ_J| B)_{\sigma} \leq H(Q_IQ_J)_{\sigma}$ as conditioning does not increase entropy, and then $H(Q_IQ_J)\leq \log_q |Q_IQ_J| = n-d+1$. 
Step \eqref{eq:bound_1_indep} is because conditioned on any $M=m$, $Q_J$ is independent of $Q_I B$ as $Q_J$ only depends on the message. 
Step \eqref{eq:bound_1_classical} follows from the non-negativity of conditional entropy when the conditioning system is classical. 
Step \eqref{eq:bound_1_dim2} is due to the weak monotonicity of quantum entropy, so $-H(Q_I| MB)_{\sigma} \leq H(Q_I)_{\sigma}$, which is then upper bounded by $\log_q|Q_I| =c-d+1$.

Second, for $c\leq d-1 \leq n$, let $J= \{d,d+1,\ldots, n\}$. We have
$\big|[n]\setminus J \big| = d-1$. Consider the decoder ${\rm
  Dec}_{\mathcal{E}}$ with $\mathcal{E} = [n] \setminus J$. Note that
the input to this decoder is $Q_J B$. We have that {\small
\begin{align}
	k &= H(M)\notag \\
	&= I(M;\widehat{M}) \label{eq:bound_2_decoding} \\
	&\leq I(M;Q_J B)_{\sigma} \label{eq:bound_2_holevo} \\
	&=I(M;Q_J| B)_{\sigma} \label{eq:bound_2_nosig} \\
	&= H(Q_J| B)_{\sigma} - H(Q_J| MB)_{\sigma} \label{eq:bound_2_def_cMI} \\
	&\leq n-d+1 - H(Q_J| MB)_{\sigma} \label{eq:bound_2_dim} \\
	&= n-d+1 - H(Q_J| M)_{\sigma} \label{eq:bound_2_indep} \\
	&\leq n-d+1. \label{eq:bound_2_classical}
\end{align}
}%
Step \eqref{eq:bound_2_decoding} is because $\Pr(M=\widehat{M}) = 1$.
Step \eqref{eq:bound_2_holevo} is from  Holevo's bound. Step \eqref{eq:bound_2_nosig} is because $B$ is independent of $M$.
Step \eqref{eq:bound_2_def_cMI} is from the definition of conditional mutual information.
Step \eqref{eq:bound_2_dim} is because $H(Q_J| B)_{\sigma} \leq H(Q_J)_{\sigma}$ as conditioning does not increase entropy, and then $H(Q_J) \leq \log_q |Q_J| = n-d+1$.
Step \eqref{eq:bound_2_indep} is because conditioned on any $M=m$, $Q_J$ is independent of $B$.
Step \eqref{eq:bound_2_classical} follows from the non-negativity of conditional entropy when the conditioning system is classical.
\end{proof}

\section{Discussion}
In this letter, we revisited the EACC coding framework of
\cite{Grassl2025codes}. Firstly, following from an observation in
\cite{sun2025capacityerasurepronequantumstorage} we presented a
solution to the open problem in \cite{Grassl2025codes} on the
tightness of the EACC Singleton bound for general EACC coding
schemes. We showed that for any admissible $n$, $d$, $c$, there exist
(in fact infinitely many) $q$ and EACC coding schemes for which the
EACC Singleton bound is achieved. Secondly, we studied a constrained
class of EACC coding schemes, explicitly those with entanglement
assistance distributed among separate encoders, and we derived a new
entropic Singleton bound for this setting. We noted that the tightness
of this bound follows from the coding schemes presented in
\cite{Grassl2025codes}.  We conclude by noting that EACC codes
obtained via the space-sharing argument may require a very large
channel dimension $q$ in certain cases. The study of EACC codes
attaining the Singleton bound with potentially smaller $q$ remains an
interesting open direction.

\bibliographystyle{IEEEtran}
\bibliography{../bib_file/yy.bib}

\end{document}